\documentclass{article}
\usepackage[utf8]{inputenc}
\usepackage{amssymb}
\usepackage{parskip}
\usepackage{amsmath}
\usepackage{relsize}

\usepackage[margin=1.5in]{geometry}

\usepackage{amsthm}
\theoremstyle{plain}

\usepackage{parskip}
\setlength{\parindent}{15pt}
\usepackage{xcolor}
\usepackage{indentfirst}
\usepackage{graphicx}

\newtheorem{theorem}{Theorem}
\newtheorem{lemma}{Lemma}
\newtheorem{proposition}{Proposition}

\theoremstyle{remark}
\newtheorem{remark}[]{Remark}

\title{An Upper Bound for the Number of Gravitationally Lensed Images in a Multiplane Point-mass Ensemble}
\author{Sean Perry}
\date{October 31, 2019}

\begin{document}

\maketitle

\begin{abstract}

\indent 
 Herein we prove an upper bound on the number of gravitationally lensed images in a generic multiplane point-mass ensemble with $K$ planes and $g_i$ masses in the $i^{\text{th}}$ plane. With $E_K$ and $O_K$ the sums of the even and odd degree terms respectively of the formal polynomial $\prod_{i=1}^K(1+g_iZ)$, the number of lensed images of a single background point-source is shown to be bounded by $E_K^2+O_K^2$. Previous studies concerning upper bounds for point-mass ensembles have been restricted to two special cases: one point-mass per plane and all point-masses in a single plane. 
\end{abstract}

\section{Introduction}

 When mass exists in the intervening space between an observer and a light source, gravitational effects can create multiple images of a single source. The multiplane ensemble is a simplification from the fully three dimensional model of gravitational potential to a finite collection of planes containing two dimensional potentials. Each of these planes and the plane containing a light source can be thought of as a copy of the complex plane. Light travels physically from the source, through the planes containing mass, to an observer beyond the first plane. Light rays follow null-geodesics in space-time, and as such will travel the same path from observer to source as from source to observer \cite[pg 77]{PetBook}. We consider then a light ray traveling back through its apparent position in the first plane to its source. At each plane, the light ray is bent by the gravitational field corresponding to the mass distribution in that plane, eventually impacting the source plane. The map $\eta:\mathbb{C}\to\mathbb{C}$ which takes the position of an image at $x$ in the first plane to the corresponding source $w$ in the source plane is called the \textit{lensing map}. Correspondingly, if $\eta(x)=w$ then an observer will see an apparent image of a source at $w$ at location $x$ in the first plane. Thus we are interested in solutions to the equation
 
\begin{equation}
\label{eq:etaw}
\eta_w(x):=\eta(x)-w=0,
\end{equation}
 which will be referred to as the \textit{lensing equation}. In Figure \ref{fig:ray}, $x$ is (and $z$ is not) a solution to (\ref{eq:etaw}). Rather than particular solutions, we are interested in the maximum number of images that can be produced by such an ensemble.

\begin{figure}
    \includegraphics[width=\textwidth]{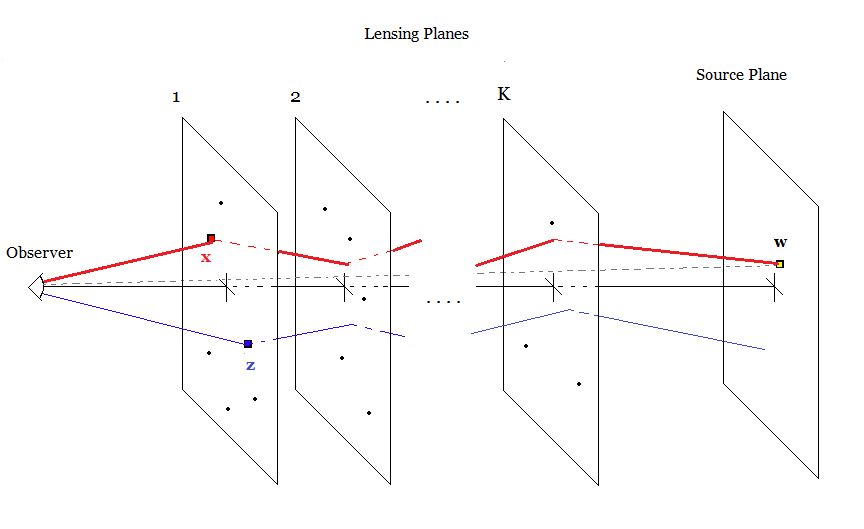}
\caption{Visualization of a K-plane point-mass lensing ensemble. Here, $x$ is (and $z$ is not) a solution to the lensing equation $\eta_w(x)=0$.} 
\label{fig:ray}
\end{figure}

It is possible that for some positions in the first plane a light ray passing through this point may impact one of the point-mass locations in one of the planes. Such rays physically may not reach the source plane. Moreover, as can be seen in (\ref{eq:recursivelens}) below, $\eta$ will fail to be defined at these points. Thus, the domain of $\eta$ is $\mathbb{C}$ with these so-called \textit{obstruction points} removed. Additionally, for a given $w$, the critical points of $\eta_w$, those $x$ for which $\text{det}[\text{Jac}\; \eta_w](x)=0$, generically consists of simple, closed curves. Given a particular mass ensemble,  the set of all critical values of $\eta$, those $w$ for which some $x$ is a critical point of $\eta_w$, are called \textit{caustics}. The images corresponding to caustic source $w$ are called \textit{degenerate images} of the lensing map $\eta$ or the \textit{degenerate zeros} of (\ref{eq:etaw}). Sard's theorem assures us that the caustic of $\eta$ has measure zero \cite{PetWer10}. We assume hereafter that our ensemble is generic in the sense sources do not lie on a caustic ($w$ is not a critical value of $\eta$). 

 The lensing map is formulated based on the physical parameters of the ensemble, in particular to the gravitational potential of the masses in each plane.  It may include terms corresponding to dark matter and continuous distributions of mass. It may also include shear, a bulk skewing effect from large and distant masses, such as galaxies far, far away. These terms can affect the upper bound on the number of solutions; however, for simplicity, we will take these terms to be zero. More information on the general formulation of the lensing map can be found in \cite[Ch. 6]{PetBook}(ch. 6). With these considerations, the lensing map for point-masses has the form:
 
\begin{equation}
\label{eq:recursivelens}
\begin{cases}
&\eta (x) = x - \mathlarger{\sum_{k=1}^K \alpha_k(x_k)},\;\;\; \text{where}  \\
&x_1 = x \;\;, \;\; x_i = x - \mathlarger{\sum_{j=1}^{i-1}\beta_{j,i} \alpha_j(x_j)} \;\;\; \text{for}\;\; i=2,3,...,K+1, \;\;\; \text{and}   \\

&\alpha_i(x_i)=\mathlarger{\sum_{\ell=1}^{g_i}\frac{m_{i,\ell}}{\overline{x}_i-\overline{y}_{i,\ell}}}.
\end{cases}
\end{equation}

 Here, the functions $\alpha_i$ are the bending-angle vectors due to the gravitational potential of the point-masses in the $i^{\text{th}}$ plane. The $x_i$ are values in $\mathbb{C}$ indicating the impact parameter of a light ray on the $i^{\text{th}}$ plane. The value $x$ is $x_1$, the impact parameter in the first plane, and can be treated as the sole independent variable upon which all subsequent $x_{i}$ with $i=2,3,...,K+1$ are dependent. (See equation (\ref{eq:unfurledlens}) below.) The $y_{i,\ell}$ are constant values in $\mathbb{C}$ indicating location of point-masses in the $i^{\text{th}}$ plane with scaled masses given by $m_{i,\ell}$, which are positive real values. The positive integers $g_i$ denote the number of masses in the $i^{\text{th}}$ plane. The $\beta_{j,i}$ are nonzero values in $\mathbb{R}$ representing scaling constants derived from the angular diameter distance between planes. The bar represents complex conjugation. 

It is perhaps useful in order to appreciate the full complexity of the lensing equation to write its non-recursive form, at least for the first several planes: 

\begin{align}
\label{eq:unfurledlens}
\eta_w(x) = 
 x &-w-\sum_{i=1}^{g_1}\frac{m_{1,i}}{\overline{x}-\overline{y}_{1             ,i}}\nonumber \\
    &- \sum_{j=1}^{g_2} \frac{m_{2,j}}{\overline{x} - \overline{y}_{2,j} - \beta_{1,2}\cdot \sum_{i=1}^{g_1}\frac{m_{1,i}}{x - y_{1,i}}}\\
    &- \sum_{k=1}^{g_3}\frac{m_{3,k}}{\overline{x}-\overline{y}_{3,k}-\beta_{1,3}\sum_{i=1}^{g_1}\frac{m_{1,i}}{x-y_{1,i}}-\beta_{2,3}\sum_{j=1}^{g_2}\frac{m_{2,j}}{x - y_{2,j} - \beta_{1,2}\sum_{i=1}^{g_1} \frac{m_{1,i}}{\overline{x}-\overline{y}_{1,i}}}}\nonumber\\
     &- ...\nonumber .
\end{align}

It will also be convenient (for manipulations such as clearing denominators) to write the lensing equation in a modified recursive form:

\begin{equation} 
\label{eq:fullrecurslens}
\begin{cases}
    &\eta_w(x) = x-w + \mathlarger{\sum_{k=1}^K Q_k(x)}, \;\;\;\text{where} \\
&Q_1 = -\mathlarger{\sum_{i=1}^{g_1}\frac{m_{1,i}}{\overline{x}-\overline{y}_{1,i}}}, \;\;\; \text{and}\\

&Q_j = -\mathlarger{\sum_{\ell=1}^{g_j}\frac{m_{j,\ell}}{\overline{x}-\overline{y}_{j,\ell}+\sum_{i=1}^{j-1}\beta_{i,j}\overline{Q}_i}}\;\;\; \text{for} \;\; j=2,3,...,k.
\end{cases}
\end{equation}

 We now state the upper bound for the number of images in a multiplane point-mass ensemble:

\medskip
\begin{theorem}
For a generic multiplane point-mass ensemble with $K$ planes and $g_i$ masses in the $i^\text{th}$ plane, let $F(Z)$ be the formal polynomial defined by $$F(Z)=\prod_{i=1}^K (1+g_i Z).$$ Let $$E_K = \frac{F(1)+F(-1)}{2} \quad \text{and} \quad O_K = \frac{F(1)-F(-1)}{2},$$ the sums of the coefficients of the even and odd degree respectively.Then the number $N$ of images satisfies 

\begin{equation}
\label{eq:NEKOK}
    N \leq E_K^2 + O_K^2.
\end{equation}

\end{theorem}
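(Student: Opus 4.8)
The plan is to complexify the lensing equation into a genuine polynomial system on $\mathbb{C}^2$, bound its bidegrees, and apply the Bézout theorem for $\mathbb{P}^1 \times \mathbb{P}^1$. Concretely, I would first replace $\overline{x}$ by a new independent variable $u$ everywhere in the modified recursive form (\ref{eq:fullrecurslens}). The nested conjugations then turn into alternations between the two variables: each $Q_k$ becomes an honest rational function $\mathbf{Q}_k(x,u)$, and $\eta_w$ becomes a rational function $\mathbf{G}(x,u)$ with $\mathbf{G}(x,\overline{x}) = \eta_w(x)$. Writing $\mathbf{G} = P/D$ after clearing denominators, and letting $\widetilde{P}(x,u)$ be the polynomial obtained the same way from the conjugate equation $\overline{\eta_w(x)} = 0$, one observes that an image $x_0$ is by hypothesis not an obstruction point, so $D(x_0,\overline{x_0}) \neq 0$; hence $\eta_w(x_0) = 0$ forces $P(x_0,\overline{x_0}) = \widetilde{P}(x_0,\overline{x_0}) = 0$. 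Consequently $N$ is bounded by the number of common zeros in $\mathbb{C}^2$ of $P$ and $\widetilde{P}$.

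The next step is the bidegree count. Tracking the numerator and denominator bidegrees of $\mathbf{Q}_k$ through the recursion and inducting on $K$, one finds $\deg_x D \le E_K - 1$ and $\deg_u D \le O_K$. The even and odd coefficient sums of $F(Z) = \prod_i (1+g_i Z)$ enter precisely because moving from plane $j-1$ to plane $j$ multiplies the relevant denominator by $g_j$ new linear factors, and the parity of the number of conjugations along a branch of the nested fraction determines whether those factors carry $x$ or $u$; using the relations $E_K = E_{K-1} + g_K O_{K-1}$ and $O_K = O_{K-1} + g_K E_{K-1}$ the induction is short. It follows that $P$ has bidegree at most $(E_K, O_K)$ in $(x,u)$, and therefore $\widetilde{P}$ has bidegree at most $(O_K, E_K)$.

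Finally, for a generic ensemble $P$ and $\widetilde{P}$ have no common factor — the exceptional parameter values form a proper Zariski-closed set — so one may homogenize both polynomials with respect to $x$ and $u$ separately and invoke Bézout on $\mathbb{P}^1 \times \mathbb{P}^1$: the total number of intersection points, counted with multiplicity and including those at infinity, equals $\deg_x P \cdot \deg_u \widetilde{P} + \deg_u P \cdot \deg_x \widetilde{P} \le E_K \cdot E_K + O_K \cdot O_K$. The affine common zeros are a subset of these, so $N \le E_K^2 + O_K^2$.

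I expect the main obstacle to be the bookkeeping behind $\deg_x D \le E_K - 1$ and $\deg_u D \le O_K$: one must set up the coupled recursion for the numerator and denominator bidegrees of $\mathbf{Q}_k$ carefully and verify that the successive clearing of nested denominators never inflates the degree past the claimed bound. The second delicate point is the justification that $P$ and $\widetilde{P}$ share no component, i.e. that the complexified system is zero-dimensional; this is exactly where the genericity hypothesis is needed (and, geometrically, the no-caustic assumption, which rules out a one-parameter family of images of the fixed source $w$).
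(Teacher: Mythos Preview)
Your proposal is correct and follows essentially the same route as the paper: clear the nested denominators to obtain a bivariate polynomial $P$ in $x$ and a second formal variable (the paper keeps the name $\overline{x}$ but explicitly treats it as independent), compute its bidegree $(E_K,O_K)$ by an induction that exploits the recursions $E_K=E_{K-1}+g_KO_{K-1}$ and $O_K=O_{K-1}+g_KE_{K-1}$, and then bound the common zeros of $P$ and its conjugate via the Resultant Theorem, which is exactly B\'ezout on $\mathbb{P}^1\times\mathbb{P}^1$. The only place the paper is more explicit than your sketch is the coprimality step: rather than asserting that a common factor is a Zariski-closed condition, it shows directly that any common-factor zero which is also a solution of $\eta_w=0$ must be a degenerate image, hence absent under the no-caustic hypothesis.
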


Previous work on the problem of finding an upper bound on the number of images given a number of lensing masses $g$ has relied on further simplifications and specification in the setup of the problem. Witt in 1990 examined the single plane case \cite{Witt90} and produced an upper bound of $g^2+1$ for ensembles with zero shear. (For positive shear, the upper bound was $(g+1)^2$.) Petters in 1997 examined the case of one mass in each of g planes \cite{Pet97} and produced an upper bound of $2(2^{2(g-1)}-1)$. In 2001, Rhie \cite{Rhie2001} conjectured that the upper bound for the single plane case was in fact $5(g-1)$. In her 2003 paper \cite{Rhie}, she produced a particular point-mass ensemble that produced this number of images. In 2006, Khavinson and Neumann \cite{KN} settled the conjecture in the affirmative that the upper bound for the one plane case could be improved to $5(g-1)$. Upper bounds of more restrictive ensembles have also been obtained. For example, the upper bound of $g+3$ images for collinear masses as shown by Kuznia and Lundberg \cite{Kuznia}.

Concerning continuous distributions of matter, it has been noted \cite[pg 436]{PetBook} that the number of images in a general lensing system can always be increased by adding a continuously distributed elliptical potential at a sufficiently far point. Thus a general upper bound on the number of images does not exist. Work with continuous distributions of matter have produced upper bounds for physically pertinent distributions of mass. For instance, Fassnacht, Keeton, and Khavinson proved in 2007 that an elliptical uniform mass distribution, such as from an elliptical galaxy, produces a maximum of four images located external to the lens \cite{Keeton}. Khavinson and Lundberg 2010 \cite{Khav2010} also showed that there are at most eight external lensed images due to an elliptic lens with isothermal density. This was later improved to six by the work of Bergweiler and Eremenko \cite{Berg2010}. Lower bounds in general have been obtained through the application of Morse theory in gravitational lensing \cite{PetWer10},\cite{PetMorse}. See \cite{Pet10} for further discussion of results related to the image counting problem.

\begin{center}
\textbf{Acknowledgements}
\end{center}

\indent
The author would like to thank the American Mathematical Society and especially Charles Keeton, Arlie Petters, and Marcus Werner for organizing a Math Research Community on the Mathematics of Gravity and Light in the Summer of 2018, for allowing the author to participate, and for their knowledge, insight, and encouragement. The author would most of all like to thank Erik Lundberg for his extensive assistance, and advice.

\section{Methodology}

Following the method of Petters in \cite{Pet97}, \cite{PetWer10}, \cite[pg 457]{PetBook}, we recall that for a generic set of parameters there are only a finite number of solutions to the lensing equation, none of which are degenerate. We clear the nested denominators of (\ref{eq:fullrecurslens}) to produce a polynomial in $x$ and $\overline{x}$ which we will call the ``lensing polynomial". Solutions to the lensing equation will also be mutual zeros of this polynomial and its conjugate. Calculating their degrees in $x$ and $\overline{x}$, we employ the method of resultants to obtain an upper bound on the number of mutual zeros and thus on the number of solutions to the lensing equation. This will lead to the result stated in Theorem 1.

First, we establish some essential properties and definitions for bivariate polynomials. We define the \textit{degree vector} of a polynomial of two variables $\rho(x,y)$ written 
$$\rho(x,y) = \sum_{m=0}^s a_m(y)x^m = \sum_{n=0}^t b_n(x)y^n\\$$
as a 2-tuple:
$$\text{deg}(\rho(x,y)):= (s,t).$$
Thus the degree vector records the highest exponent of each variable. We say that $\rho$'s degree in $x$ and $y$ is $s$ and $t$ respectively, written
$$
\text{deg}_x(\rho)=s \;\;\;\; \text{and} \;\;\;\; \text{deg}_y(\rho)=t.
$$
The degree vector admits a partial order: given another polynomial $\upsilon(x,y)$ such that deg$(\upsilon) = (v,w)$ we write $\text{deg}(\rho) \geq \text{deg}(\upsilon)$ iff $s \geq v$ and $t \geq w$.
\bigskip

\begin{lemma}{\textit{Properties of the degree vector of bivariate polynomials:}}
\label{lem:degvectprop}
    Let $p$ and $q$ be polynomials in $x$ and $y$ given by 
$$p(x,y) = \sum_{i=0}^s a_i(y)x^i = \sum_{i=0}^t b_i(x)y^i$$ and 
$$q(x,y) = \sum_{i=0}^v c_i(y)x^i = \sum_{i=0}^w d_i(x)y^i$$
where $a_i$ and $c_i$ are polynomials in $y$, and $b_i$ and $d_i$ are polynomials in $x$, and thus $\text{deg}(p)=(s,t)$ and $\text{deg}(q)=(v,w)$. Let $n$ be a positive integer. Then 
\begin{align*} 
&\text{a})\;\; & \text{deg}(p+q) &= (\text{max}\{s,v\}, \text{max}\{t,w\})\\
&\text{b})\;\; & \text{deg}(p \cdot q) &= (s+v , t+w)\\
&\text{c})\;\; & \text{deg}(p^n) &= (n \cdot s, n \cdot t)\\ 
&\text{d}) & \text{deg}(q)\leq \text{deg}(p) \implies  \text{deg}(p+q) &= \text{deg}(p)
\end{align*}

\end{lemma}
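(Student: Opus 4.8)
The four identities are all consequences of tracking the leading coefficient in each variable separately, the one genuine input being that the polynomial ring over $\mathbb{C}$ is an integral domain. The plan is to prove (b) first, obtain (c) from it by induction, and then read off (a) and (d) by direct inspection of coefficients.

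For (b), write the leading coefficient of $p$ in $x$ as $a_s(y)$, a nonzero element of $\mathbb{C}[y]$, and similarly $c_v(y)\ne 0$ for $q$. The coefficient of $x^{s+v}$ in $p\cdot q$ is $a_s(y)c_v(y)$, which is nonzero since $\mathbb{C}[y]$ has no zero divisors, and no higher power of $x$ occurs in the product; hence $\deg_x(p\cdot q)=s+v$. Repeating the argument verbatim with the roles of $x$ and $y$ exchanged, using the representations $p=\sum b_i(x)y^i$ and $q=\sum d_i(x)y^i$ and the fact that $\mathbb{C}[x]$ is also an integral domain, gives $\deg_y(p\cdot q)=t+w$, and the two statements together are (b). Part (c) then follows by induction on $n$: the case $n=1$ is trivial, and $\deg(p^n)=\deg(p^{n-1}\cdot p)=\bigl((n-1)s+s,\ (n-1)t+t\bigr)=(ns,nt)$ by the inductive hypothesis and (b).

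For (a), grouping like terms gives $p+q=\sum_{i=0}^{\max\{s,v\}}\bigl(a_i(y)+c_i(y)\bigr)x^i$, so no power of $x$ larger than $\max\{s,v\}$ can appear, and symmetrically in $y$; thus $\deg(p+q)\le(\max\{s,v\},\max\{t,w\})$, with equality whenever the top-degree coefficients do not cancel (in particular whenever $s\ne v$, resp. $t\ne w$, since then one of the two competing leading coefficients is zero and the other is not). Part (d) is the immediate specialization: if $\deg(q)\le\deg(p)$ then $v\le s$ and $w\le t$, so $\max\{s,v\}=s$ and $\max\{t,w\}=t$, and the $x^s$- and $y^t$-coefficients of $p$ persist in $p+q$, yielding $\deg(p+q)=\deg(p)$. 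The only delicate point is precisely this possibility of leading-coefficient cancellation in (a) and (d); I do not expect it to be a real obstacle, since in the sequel the degree-vector bookkeeping only ever invokes the upper-bound direction, and the polynomials to which (d) is applied are assembled so that the relevant leading terms survive — this can be noted in a remark rather than belabored in the proof.
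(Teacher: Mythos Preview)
Your proof is correct and aligned with the paper's approach: the paper does not actually give a proof of this lemma, only a remark that it ``follows straight from the basic rules for adding and multiplying polynomials of a single variable,'' and what you have written is precisely a careful fleshing-out of that remark. Your observation that (a) and (d) can fail under leading-coefficient cancellation, and that only the upper-bound direction is genuinely used downstream, is a fair and useful caveat that the paper itself glosses over.
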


\begin{remark}
The proof of lemma \ref{lem:degvectprop} follows straight from the basic rules for adding and multiplying polynomials of a single variable. 
\end{remark}

With this notation, we present a version of the resultant theorem summarized from \cite[pg 158-159]{CoxBook}. More information on the method of resultants and the so-called ``Sylvester-matrix'' can be found in \cite[Ch. 3, \textsection 6]{CoxBook}and \cite[pg 437-438]{PetBook}.

\bigskip

\begin{theorem}{The Resultant Theorem}

Let $p(x,y)$ and $q(x,y)$ be coprime polynomials in $x$ and $y$ with coefficients from an algebraically closed field. Then the number $M$ of solutions common to both $p = 0$ and $q = 0$ is bounded by
    $$
    M \leq \text{deg}_x(p)\cdot \text{deg}_y(q) + \text{deg}_y(p) \cdot \text{deg}_x(q).
    $$
\end{theorem}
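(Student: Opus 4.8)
The plan is to prove the bound by eliminating one variable with the Sylvester resultant, following the cited treatment in Cox. First I would regard both $p$ and $q$ as polynomials in the single variable $x$ with coefficients in the ring $K[y]$, writing $p = \sum_{i=0}^{s} a_i(y) x^i$ and $q = \sum_{i=0}^{v} c_i(y) x^i$ with $s = \deg_x(p)$, $v = \deg_x(q)$, and $\deg_y a_i \le \deg_y(p) =: t$, $\deg_y c_i \le \deg_y(q) =: w$. I then form the resultant $R(y) := \mathrm{Res}_x(p,q) \in K[y]$, the determinant of the $(s+v)\times(s+v)$ Sylvester matrix. Two facts about $R$, both available from the cited resultant theory, are what I need: (i) $R \not\equiv 0$ precisely because $p$ and $q$ are coprime, so that (by Gauss's lemma) they share no common factor of positive degree in $x$ over $K(y)$; and (ii) if $(x_0,y_0)$ is a common zero of $p$ and $q$, then $R(y_0)=0$, since $p(\cdot,y_0)$ and $q(\cdot,y_0)$ then share the root $x_0$.

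The degree estimate comes directly from the Sylvester determinant. Expanding $\det$ as a signed sum of products of $s+v$ entries, one from each row, the $v$ rows built from the $a_i(y)$ contribute factors of $y$-degree at most $t$, while the $s$ rows built from the $c_i(y)$ contribute factors of $y$-degree at most $w$. Hence every term has $y$-degree at most $vt + sw$, giving
$$\deg_y R \le vt + sw = \deg_x(q)\deg_y(p) + \deg_x(p)\deg_y(q),$$
which is exactly the asserted quantity $B := \deg_x(p)\deg_y(q) + \deg_y(p)\deg_x(q)$.

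It remains to convert this degree bound on $R$ into a bound on $M$, the number of common zeros. Since each common zero projects to a root of $R$, the number of \emph{distinct} $y$-coordinates occurring among the common zeros is at most $\deg_y R \le B$. To count the zeros themselves I would invoke the finer property that, at a value $y_0$ where the leading coefficient $a_s(y_0)$ does not vanish, the order of vanishing of $R$ at $y_0$ is at least the number of distinct common $x$-roots of $p(\cdot,y_0)$ and $q(\cdot,y_0)$ (this follows from the factored form $R = a_s^{\,v}\prod_i q(\xi_i(y),y)$, where the $\xi_i$ are the roots of $p$ in $x$). Summing these local contributions over all roots $y_0$ of $R$ would then give $M \le \sum_{y_0}\mathrm{ord}_{y_0}(R) = \deg_y R \le B$.

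The main obstacle is exactly this last conversion: a priori several common zeros may lie over a single $y_0$, and the leading coefficient $a_s(y)$ may vanish at some relevant $y_0$, letting roots escape to infinity and breaking the clean correspondence between solutions and resultant roots. I would handle this by a preliminary reduction to general position, arranging after a generic coordinate choice that the common zeros have distinct $y$-coordinates and that $a_s$ is nonvanishing there; the delicate point is that a linear shear does not preserve the bidegree $(s,t)$, so the reduction must be argued carefully. An alternative that sidesteps the bidegree issue is to pass to the compactification $\mathbb{P}^1 \times \mathbb{P}^1$, bihomogenizing $p$ and $q$ to bidegrees $(s,t)$ and $(v,w)$ and invoking B\'ezout's theorem there, whose intersection number is $sw + tv = B$. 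Either route closes the gap; the first stays within the elementary resultant framework of the cited reference, while the second makes transparent why the bound takes the precise form $\deg_x(p)\deg_y(q)+\deg_y(p)\deg_x(q)$, namely as the $\mathbb{P}^1\times\mathbb{P}^1$ B\'ezout number.
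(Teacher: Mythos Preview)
The paper does not actually prove this statement: it is quoted as a known result, ``summarized from'' Cox--Little--O'Shea, and is used as a black box in the proof of Theorem~1. So there is no in-paper argument to compare your proposal against; your sketch via the Sylvester resultant in $x$ is precisely the standard route taken in the cited reference and is appropriate here.

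That said, you have correctly identified the one genuine subtlety in the elementary resultant argument, namely the passage from $\deg_y R \le sw+tv$ to a bound on the number of common zeros when several of them share a $y$-coordinate or when $a_s(y_0)=0$. Your first proposed fix, a generic linear change of coordinates, does not work cleanly in this setting because, as you note, a shear mixing $x$ and $y$ destroys the bidegree and hence the very quantity $sw+tv$ you are trying to bound. Your second fix is the right one: bihomogenize to curves of bidegree $(s,t)$ and $(v,w)$ on $\mathbb{P}^1\times\mathbb{P}^1$ and invoke B\'ezout there, which gives exactly $sw+tv$ intersection points counted with multiplicity and absorbs both the fiber-multiplicity issue and the points at infinity. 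If you want to stay purely within the affine resultant framework, the cleanest route is to show directly that $\mathrm{ord}_{y_0}\bigl(\mathrm{Res}_x(p,q)\bigr)$ is at least the sum of the local intersection multiplicities of $p=q=0$ over $y_0$; this is standard but not entirely trivial when $a_s(y_0)=0$, and your factored-form heuristic $R=a_s^{\,v}\prod_i q(\xi_i(y),y)$ needs the $\xi_i$ to be at least Puiseux series for the order count to go through.
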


Observe that $\eta_w$, as defined in (\ref{eq:fullrecurslens}), may be considered either as a function of the single variable $x$ or as a \textit{rational} function of $x$ and $\overline{x}$. 

To ``clear the denominators" in $\eta_w$ we define
\begin{equation}
\label{eq:D}
\left \{ 
\begin{aligned}
&D_1 = \prod_{\ell=1}^{g_1}(\overline{x}-\overline{y}_{1,\ell})\;\;\;\text{and}  \\
&D_j = \prod_{\ell=1}^{g_j}(\overline{x}-\overline{y}_{j,\ell} + \sum_{i=1}^{j-1} \beta_{i,j}\overline{Q}_i)\cdot (\prod_{k=1}^{j-1} \overline{D}_k)^{g_j} 
\end{aligned}
\right.
\end{equation}
 for $i=2,3,...,K$ and write $D = \prod_{i=1}^K D_i$. Note each $D_j$ is a polynomial; even though rational functions appear in its definition, they are immediately cancelled. The product $P:=\eta_w\cdot D$ is also a polynomial which we will call the \textit{lensing polynomial}. The corresponding \textit{conjugate polynomial} is $\overline{P} = \overline{\eta_w} \cdot \overline{D}$ where the conjugation of $\eta_w$ and $D$ entails conjugating every coefficient \textit{and} variable. Note that if $\eta_w(x_0, \overline{x_0})=0$ then $D(x_0, \overline{x_0})\neq 0 \neq \overline{D}(x_0, \overline{x_0})$ since if $D(x_0, \overline{x_0})=0$ or $\overline{D}(x_0, \overline{x_0})=0$ then $\eta_w$ or $\overline{\eta_w}$ (and hence $\eta_w$) are undefined at this point. This is because the zeros of $D$ are precisely the obstruction points of $\eta_w$. As in \cite[pg 463]{PetBook}, any solutions of $\eta_w(x)=0$ will also be a solution of both
\begin{equation}
P(x,\overline{x})=0 \;\;\;\; \text{and} \;\;\;\; \overline{P}(x,\overline{x})=0.
\end{equation}
Thus we can produce an upper bound on the number of lensed images by bounding the number of common zeros of $P$ and $\overline{P}$. While applying the Resultant Theorem, we treat the  variable $\overline{x}$ as a second variable, independent from $x$. We note in passing that this may overestimate the number of mutual solutions of $P$ and $\overline{P}$. 

In order to apply the Resultant Theorem to $P$ and $\overline{P}$, we must first justify that these can be treated as coprime. If the lensing polynomial and its conjugate were not coprime they would have infinitely many common zeros  which lie on the locus $p=0$, where $p$ is the greatest common factor of $P$ and $\overline{P}$. Generically, only a finite number of these will be zeros of $\eta_w$, but it is unknown a priori which lie on the locus $p=0$. This prevents us from cancelling $p$ immediately and working with the relatively prime polynomials $p_1=P/p$ and $p_2=\overline{P}/p$.  However, if $p(x_0,\overline{x_0})=0$ and $\eta_w(x_0,\overline{x_0})=0$ then, as we explain below, the lensing equation is degenerate at $(x_0,\overline{x_0})$. Since generically the lensing equation has no degenerate solutions, we may then safely eliminate $p$ without fear of losing some of the zeros pertaining to lensed images. Writing $\eta_w = p\cdot p_1 / D$ and $\overline{\eta_w} = p\cdot p_2/\overline{D}$, we have 
\begin{align*}
\eta_w 
    &= \left(\frac{\eta_w + \overline{\eta_w}}{2},\frac{\eta_w - \overline{\eta_w}}{2i}\right)\\
    &= \left(p\cdot \frac{1}{2}\left(\frac{p_1}{D}+\frac{p_2}{\overline{D}}\right),p\cdot\frac{1}{2i}\left(\frac{p_1}{D}-\frac{p_2}{\overline{D}}\right)\right)\\
    &= (p\cdot U, p\cdot V) 
\end{align*}
with $U=(p_1/D+p_2/\overline{D})/2$ and $V=(p_1/D-p_2/\overline{D})/2i$.
Momentarily using the natural identification of $\mathbb{C}$ with $\mathbb{R}^2$ with coordinates $(x,y)$ we write
$$
\text{Jac}(\eta_w)=\begin{bmatrix} 
 p \cdot \frac{\partial U}{\partial x} & p \cdot \frac{\partial V}{\partial x} \\[.25cm]
p \cdot \frac{\partial U}{\partial y} & p \cdot \frac{\partial V}{\partial y} 
\end{bmatrix}
+
\begin{bmatrix} 
\frac{\partial p}{\partial x}\cdot U & \frac{\partial p}{\partial x}\cdot V \\[.25cm]
\frac{\partial p}{\partial y}\cdot U & \frac{\partial p}{\partial y}\cdot V
\end{bmatrix}.
$$
 Note that the  first matrix is zero when we plug in $(x_0, \overline{x_0})$ and the determinant of the second matrix is then identically zero, which confirms our assertion. We may thus cancel $p$ and work with the relatively prime reduced polynomials $p_1$ and $p_2$. The degree vectors of $p_1$ and $p_2$ are less than or equal to the degrees of $P$ and $\overline{P}$ respectively. Therefore the resultant polynomial of $p_1$ and $p_2$ in the case of $p$ non-constant will have less than or equal to the degree of the resultant of $P$ and $\overline{P}$ in the case of $p$ constant. Since this only improves the upper bound, we will simply treat the original $P$ and $\overline{P}$ as coprime.

\section{Proof of Theorem 1}
\bigskip

Since multiplication of a bivariate polynomial by constants does not change its degree vector, Lemma \ref{lem:degvectprop} justifies setting each $m_{i,j}$ and $\beta_{i,j}$ in (\ref{eq:fullrecurslens}) and (\ref{eq:D}) equal to one. Since setting the constants to 1 respects degree, we may use this simplified polynomial as a surrogate for the  purposes of calculating degree. We will refer to $n_w$ and $D$ hereafter with the substitutions $m_{i,j}=1=\beta_{i,j}$ in place.

The lensing polynomial $P=\eta_w\cdot D$  includes numerous terms of the form
$$
\sum_{\ell=1}^{g_j}(...\prod_{\substack{j=1 \\ j\neq \ell}}^{g_j}(...)).
$$
The degree vector of such a term is less than a cancellation-free term, that is one of the form 
$$
\sum_{\ell=1}^{g_j}(...\prod_{j=1}^{g_j}(...)).
$$
We now have the following expression for the lensing polynomial
\begin{equation}
\label{eq:P}
P(x,\overline{x}) := (x-w)\cdot \prod_{i=1}^k D_i + \sum_{\ell=1}^k Q_i\cdot D_i\prod_{\substack{i=1 \\ i\neq \ell}}^{g_j} D_i  
\end{equation}
Since we are merely interested in the degree of this polynomial, we can restrict our attention to finding the degree of 
\begin{equation}
\label{eq:Phat}
\hat{P}(x,\overline{x}) := (x-w)\cdot \prod_{i=1}^kD_i
\end{equation}
since $\text{deg}(P) = \text{deg}(\hat{P})$ as shown in Proposition 1 below.
\bigskip

\begin{proposition}
With the above notation and definitions, we have the  following:
\begin{align*}
    &a)\;\text{deg}(P) = \text{deg}(\hat{P})\\ 
    &b)\;\text{deg}(P) = (E_K,O_K)
\end{align*}
\end{proposition}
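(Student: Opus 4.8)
The plan is to compute the degree vector $\deg(D_j)=:(a_j,b_j)$ of each cleared-denominator factor by induction on $j$ and then to assemble $\deg(\hat P)$ and $\deg(P)$ from these, working throughout with the cancellation-free surrogates in which every $m_{i,j}$ and $\beta_{i,j}$ is set to $1$ (legitimate by Lemma \ref{lem:degvectprop}, since multiplying by constants does not change the degree vector). The structural fact I would isolate first is that, although $D_j$ is built from the rational functions $\overline Q_i$ with $i<j$, once the inner expression $\overline x-\overline y_{j,\ell}+\sum_{i<j}\overline Q_i$ is put over the common denominator $\prod_{i<j}\overline D_i$, the trailing factor $(\prod_{i<j}\overline D_i)^{g_j}$ in the definition of $D_j$ cancels that denominator exactly, so that $D_j=\prod_{\ell=1}^{g_j}R_{j,\ell}$, where (writing $N_i:=Q_iD_i$, itself a polynomial) $R_{j,\ell}=(\overline x-\overline y_{j,\ell})\prod_{i<j}\overline D_i+\sum_{i<j}\overline N_i\prod_{k<j,\,k\neq i}\overline D_k$. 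Alongside the induction on $\deg(D_j)$ I would carry the companion estimate $\deg(N_j)=(a_j,b_j-1)$, because $\overline N_i$ appears inside $R_{j,\ell}$ and feeds the next step.

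For the inductive step, observe that in $R_{j,\ell}$ the term $(\overline x-\overline y_{j,\ell})\prod_{i<j}\overline D_i$ dominates every summand of $\sum_{i<j}\overline N_i\prod_{k\neq i}\overline D_k$ in \emph{both} coordinates of the degree vector: it beats them in the $x$-degree because $\deg(N_i)$ is one short of $\deg(D_i)$ in the relevant coordinate, and in the $\overline x$-degree because of the extra factor $\overline x$. So no cancellation can happen and $\deg(R_{j,\ell})=\bigl(\sum_{i<j}b_i,\ 1+\sum_{i<j}a_i\bigr)$. Setting $A_k=\sum_{i=1}^k a_i$, $B_k=\sum_{i=1}^k b_i$, this gives the coupled recursion $a_j=g_jB_{j-1}$, $b_j=g_j(1+A_{j-1})$, with empty products/sums producing the base case $a_1=0$, $b_1=g_1$; and expanding $N_j=-\sum_\ell\prod_{i<j}\overline D_i\prod_{\ell'\neq\ell}R_{j,\ell'}$ verifies $\deg(N_j)=(a_j,b_j-1)$, which closes the induction.

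Part (a) is then immediate: from $P=\hat P+\sum_{k=1}^K N_k\prod_{i\neq k}D_i$ each summand $N_k\prod_{i\neq k}D_i$ has degree vector $(A_K,\ B_K-1)$, which lies strictly below $\deg(\hat P)=(1+A_K,\ B_K)$ in both coordinates; hence by Lemma \ref{lem:degvectprop}(a),(d) (indeed, because the leading terms of $\hat P$ survive) $\deg(P)=\deg(\hat P)=(1+A_K,\ B_K)$.

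For part (b) I would put $\widetilde A_k=1+A_k$ and rewrite the recursion as $\widetilde A_k=\widetilde A_{k-1}+g_kB_{k-1}$, $B_k=B_{k-1}+g_k\widetilde A_{k-1}$, with $\widetilde A_0=1$, $B_0=0$. On the other side, with $F_k(Z)=\prod_{i=1}^k(1+g_iZ)=F_{k-1}(Z)(1+g_kZ)$ and $E^{(k)}=\tfrac12\bigl(F_k(1)+F_k(-1)\bigr)$, $O^{(k)}=\tfrac12\bigl(F_k(1)-F_k(-1)\bigr)$, the identities $F_k(\pm 1)=F_{k-1}(\pm 1)(1\pm g_k)$ give exactly $E^{(k)}=E^{(k-1)}+g_kO^{(k-1)}$ and $O^{(k)}=O^{(k-1)}+g_kE^{(k-1)}$ with $E^{(0)}=1$, $O^{(0)}=0$. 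These are the same recursion with the same initial data, so $\widetilde A_K=E^{(K)}=E_K$ and $B_K=O^{(K)}=O_K$, whence $\deg(P)=(E_K,O_K)$. The step I expect to be the main obstacle is the inductive computation of $\deg(D_j)$ — in particular, checking that the $(\overline x-\overline y_{j,\ell})\prod_{i<j}\overline D_i$ term dominates termwise, which relies on the companion estimate $\deg(N_i)=(a_i,b_i-1)$ and on there being no accidental degree collapse once the constants are normalized; everything after the recursion $a_j=g_jB_{j-1}$, $b_j=g_j(1+A_{j-1})$ is bookkeeping.
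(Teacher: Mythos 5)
Your proposal is correct and follows essentially the same route as the paper: expanding each $D_j$ over the common denominator $\prod_{i<j}\overline{D}_i$, showing by induction that the cancellation-free term dominates (your companion estimate $\deg(Q_jD_j)=(a_j,b_j-1)$ is the paper's inductive claim $\deg(D_jQ_j)\leq\deg(D_j)$ made quantitative), and then matching the resulting degree recursion to the coefficient recursion of $\prod_{i=1}^K(1+g_iZ)$. Your verification of that last step via $F_k(\pm 1)=F_{k-1}(\pm 1)(1\pm g_k)$ is a slightly cleaner piece of bookkeeping than the paper's verbal coefficient-tracking, but it is the same argument.
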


\begin{proof}
To prove $a)$ it suffices to show that  $\text{deg}(\hat{P})\geq \text{deg}(P)$ since equality then follows from Lemma \ref{lem:degvectprop}. We induct on the  number of planes, $K$. Comparing the terms in $P$ and  $\hat{P}$ in (\ref{eq:Phat}) and (\ref{eq:P}) and invoking Lemma \ref{lem:degvectprop}, it suffices to show that

\begin{equation}
\text{deg}(D_j) \geq \text{deg}(D_j \cdot Q_j) \;\; \text{for all}  \;\; j=1,2,...,K .
\end{equation}

Note: 
\begin{align*}
D_1\cdot Q_1 &= \prod_{\ell=1}^{g_1}(\overline{x}-\overline{y}_{1,\ell}) \cdot \sum_{\ell=1}^{g_1}\frac{1}{\overline{x}-\overline{y}_{1,\ell}} \\
   &= \sum_{\ell=1}^{g_1}  \prod_{\substack{i=1 \\ i\neq \ell}}^{g_1}(\overline{x}-\overline{y}_{1,i}),
\end{align*}
thus $\text{deg}(D_1\cdot Q_1) = (0,(g_1-1))$ and $\text{deg}(D_1) = (0, g_1)$. This establishes the base case for induction on $K$. Assume, then, that $a)$ holds for all $\ell=1,2,...K-1$. 

Note for two polynomials $S(x,\overline{x})$ and $R(x,\overline{x})$ we have
$$
\text{deg}(S)\leq \text{deg}(R) \;\;\; \iff \;\;\; \text{deg}(\overline{S})\leq \text{deg}(\overline{R}) .
$$
It then follows from our induction hypothesis that for $0\leq i \leq K-1$
$$
    \text{deg}(\overline{D}_i\cdot \overline{Q}_i) \leq \text{deg}(\overline{D}_i).
$$
We may write $D_K$ as
\begin{align*}
    D_K &= \prod_{i=1}^{g_K}(\overline{x}-\overline{y}_{K,i}+\sum_{j=1}^{K-1} \overline{Q}_j)\cdot (\prod_{\ell=1}^{K-1}\overline{D}_\ell)^{g_K}\\
    &= \prod_{i=1}^{g_K}\left((\overline{x}-\overline{y}_{K,i})\cdot \prod_{\ell=1}^{K-1}\overline{D}_\ell +\sum_{j=1}^{K-1}\overline{D}_j\cdot \overline{Q}_j \cdot \prod_{\substack{\ell=1 \\ \ell\neq j}}^{K-1}\overline{D}_\ell\right) .
\end{align*}
Note by Lemma 1 and our induction hypothesis we have that
\begin{equation*}
\text{deg}(\prod_{j=1}^{K-1}\overline{D}_j) \geq \text{deg}\left(\sum_{j=1}^{K-1} \overline{Q}_j\cdot \overline{D}_j \cdot \prod_{\substack{i=1 \\ i\neq j}}^{K-1}\overline{D}_i\right),
\end{equation*}
and
\begin{equation*}
\begin{split}
\text{deg}(D_K)&=\text{deg}\left(\prod_{i=1}^{g_K}(\overline{x}-\overline{y}_{K,i})\cdot (\prod_{\ell=1}^{K-1}\overline{D}_\ell)^{g_K}\right).
\end{split}
\end{equation*}

We have also that 
\begin{equation*}
\begin{split}
D_K\cdot Q_K &= \prod_{i=1}^{g_K}(\overline{x}-\overline{y}_{K,i}+\sum_{j=1}^{K-1} \overline{Q}_j) (\prod_{\ell=1}^{K-1}\overline{D}_\ell)^{g_K} \cdot \sum_{i=1}^{g_K} \frac{1}{\overline{x}-\overline{y}_{K,i}+\sum_{j=1}^{K-1}\overline{Q}_j}\\
 &= \sum_{i=1}^{g_K}\left[\prod_{\substack{m=1 \\ m\neq i}}^{g_K}(\overline{x}-\overline{y}_{K,m}+\sum_{j=1}^{K-1} \overline{Q}_j)\right] (\prod_{\ell=1}^{K-1}\overline{D}_\ell)^{g_K} \\
 &= \sum_{i=1}^{g_K}\left[\prod_{\substack{m=1 \\ m \neq i}}^{g_K}\left((\overline{x}-\overline{y}_{K,m}) \prod_{\ell=1}^{K-1}\overline{D}_\ell+(\sum_{j=1}^{K-1}\overline{Q}_j) \prod_{\ell=1}^{K-1}\overline{D}_\ell\right)\right] (\prod_{\ell=1}^{K-1}\overline{D}_\ell)\\
 &=\sum_{i=1}^{g_K}\left[\prod_{\substack{m=1 \\ m \neq i}}^{g_K}\left((\overline{x}-\overline{y}_{K,m}) \prod_{\ell=1}^{K-1}\overline{D}_\ell+(\sum_{j=1}^{K-1}\overline{Q}_j  \overline{D}_j) \prod_{\substack{\ell=1 \\ \ell \neq j}}^{K-1}\overline{D}_\ell\right)\right] (\prod_{\ell=1}^{K-1}\overline{D}_\ell).
\end{split}
\end{equation*}

Again through our induction hypothesis and Lemma 1 we have
\begin{equation*}
\begin{split}
    \text{deg}(D_K \cdot Q_K) &= \text{deg}\left(\left[\sum_{i=1}^{g_K} \left(\prod_{\substack{m=1 \\ m\neq i}}^{g_K}(\overline{x}-\overline{y}_{K,m})\prod_{\ell=1}^{K-1}\overline{D}_\ell\right)\right]\cdot \prod_{\ell=1}^{K-1}\overline{D}_\ell\right)\\
    &\leq \text{deg}\left(\prod_{m=1}^{g_K} (\overline{x}-\overline{y}_{K,m}) \cdot (\prod_{\ell=1}^{K-1}\overline{D_\ell})^{g_K}\right)\\
    &= \text{deg}(D_K).\\
\end{split}
\end{equation*}

 This completes the induction and proves $a)$. 

To prove $b)$, first note that 
\begin{equation}
\label{eq:Pdeg}
\text{deg}(P)=\left(1+\text{deg}_x(\prod_{i=1}^KD_i),\text{deg}_{\overline{x}}(\prod_{i=1}^KD_i)\right).
\end{equation}

To relate how (\ref{eq:Pdeg}) corresponds to (\ref{eq:NEKOK}), examine how the degree changes as we add an additional plane. Let us call $P_K$ the polynomial for $K$ planes and $P_{K-1}$ the polynomial for a matching ensemble, save without the $K^\text{th}$ plane. We must examine the new terms created going from $E_{K-1}$ and $O_{K-1}$ to $E_{K}$ and $O_{K}$ then find this same pattern in the calculation of the degree of $P_{K}$. As noted, the terms of $E_{K-1}$ and $O_{K-1}$ are the coefficients of the terms of even and odd degree respectively in the polynomial $\prod_{i=1}^{K-1} (1+g_iZ)$. Let us consider two polynomials $\prod_{i=1}^{K-1} (1+g_iZ)$ and $\prod_{i=1}^{K} (1+g_iW)$ corresponding to the generation of degree for $K-1$ and $K$ planes respectively. Specifically consider how their coefficients will differ. Expanding and inspecting, the constant term will of course still be one, and the coefficient of $W^n$ where $1< n < K$ will have a coefficient comprised of the previous coefficient of $Z^n$  plus new terms comprised of the coefficients of $Z^{n-1}$ multiplied by $g_K$. This pattern even includes the coefficient of $W^K$ where the coefficient of $Z^K$ is zero. Now examine the degree of $D_K$:
\begin{equation}
\label{eq:degDK}
\begin{split}
\text{deg}(D_K)&=\text{deg}\left(\prod_{i=1}^{g_K}(\overline{x}-\overline{y}_{K,i})\cdot (\prod_{\ell=1}^{K-1}\overline{D}_\ell)^{g_K}\right) \\
&=\left(g_K \cdot \text{deg}_{\overline{x}}(\prod_{i=1}^{K-1}D_i), g_K + g_K \cdot \text{deg}_x (\prod_{i=1}^{K-1}D_i)\right) .
\end{split}
\end{equation}
Hence when we multiply $\prod_{i=1}^{K-1}D_i$ by $D_j$ this increases the degree by the amount indicated in equation (\ref{eq:degDK}). But this and (\ref{eq:Pdeg}) tells us 
$$
\text{deg}(P_K)=\left(\text{deg}_{x}(P_{K-1}) + g_K \cdot \text{deg}_{\overline{x}}(P_{K-1}), \text{deg}_{\overline{x}}(P_{K-1})+g_K\cdot \text{deg}_{x}(P_{K-1})\right)
$$
This is precisely the changes in the coefficients described in the preceding discussion. This proves $b)$ and thus Proposition 1.

\end{proof}

Note that the arithmetic operations used to clear the denominators commute with taking the complex conjugate. Hence
\begin{align*}
    \text{deg}(P(x,\overline{x})) &= (E_k,O_k),  \\
    \text{deg}(\overline{P}(x,\overline{x})) &= (O_k,E_k).
\end{align*}

Theorem 1 then follows from the Resultant Theorem.

\begin{flushright}
$\blacksquare$
\end{flushright}
\bigskip

Figure \ref{fig:chart} clarifies some of the preceding formulas and computations. Looking at the degree of $P$ as we clear the first five planes, the combinatorial pattern described by Theorem 1 emerges. The column sums give the degree of $P$ in terms of $x$ (left column sum) and $\overline{x}$ (right column sum). The horizontal lines demark the contributions from each plane. Note the left column contains all even combinations of the $g_i$, including 1 as the empty combination, and the left column contains all the odd combinations of the $g_i$, up to $i=5$. The $1$ in the upper left corner does not come from the $D_j$ terms, but from the $(x-w)$. (Accordingly, with $\text{deg}(D_1)= (0,g_1)$, we thereby get the ``extra'' $g_K$ in (\ref{eq:degDK}).) Recognizing the entries as the symmetric functions, we confirm the left (right) column sum is the sum of the coefficients of terms of even (odd) degree in the formal polynomial $\prod_{i=1}^5 (1+g_iZ)$.

 \begin{figure}
 \begin{equation}
\left(
\begin{array}{c|c}
    \color{red} 1 & \color{blue}g_1 \\
    \hline
    \color{red}g_1 \cdot g_2 & \color{blue}g_2 \\
    \hline
    \color{red}g_1 \cdot g_3 & \color{blue}g_3 \\
    \color{red}g_2 \cdot g_3 & \color{blue}g_1 \cdot g_2 \cdot g_3 \\
    \hline
    \color{red}g_1 \cdot g_4 & \color{blue}g_4 \\ 
    \color{red}g_2 \cdot g_4 & \color{blue}g_1 \cdot g_2 \cdot g_4 \\
    \color{red}g_3 \cdot g_4 & \color{blue}g_1 \cdot g_3 \cdot g_4 \\
    \color{red}g_1 \cdot g_2 \cdot g_3 \cdot g_4 & \color{blue}g_2 \cdot g_3 \cdot g_4 \\
    \hline
    \color{blue}g_1 \cdot \color{green}g_5 & \color{red}1 \cdot \color{green}g_5 \\
    \color{blue}g_2 \cdot \color{green}g_5 & \color{red}g_1 \cdot g_2 \cdot \color{green}g_5 \\
    \color{blue}g_3 \cdot \color{green}g_5 & \color{red}g_1 \cdot g_3 \cdot \color{green}g_5 \\
    \color{blue}g_4 \cdot \color{green}g_5 & \color{red}g_1 \cdot g_4 \cdot \color{green}g_5 \\
    \color{blue}g_1 \cdot g_2\cdot g_3 \cdot \color{green}g_5 & \color{red}g_2 \cdot g_3 \cdot \color{green}g_5 \\
    \color{blue}g_1 \cdot g_2\cdot g_4 \cdot \color{green}g_5 & \color{red}g_2 \cdot g_4 \cdot \color{green}g_5 \\
    \color{blue}g_1 \cdot g_3\cdot g_4 \cdot \color{green}g_5 & \color{red}g_3 \cdot g_4 \cdot \color{green}g_5 \\  
    \color{blue}g_2 \cdot g_3\cdot g_4 \cdot \color{green}g_5 & \color{red}g_1 \cdot g_2 \cdot g_3 \cdot g_4 \cdot \color{green}g_5 \\  
    \hline
    ... & ...\\
\end{array}
\right)
\end{equation}
\caption{The degree of the lensing polynomial in $x$ and $\overline{x}$. The left column sum is its degree in $x$ and the right column sum is its degree in $\overline{x}$. The horizontal lines demark which plane contributed the terms to the overall degree.}
 \label{fig:chart}
\end{figure}

\section {Concluding Remarks}

 Bezout's Theorem \cite[pg 420]{CoxBook} provides an alternative method to compute an upper bound on the number of lensed images. One could apply this theorem to the lensing equation as defined in (\ref{eq:recursivelens}), treating it as system of $2K$ equations in $2K$ unknowns. In this application, however, Bezout's bound gives $(E_K + O_K)^2$ rather than $E_K^2 + O_K^2$ and is thus strictly greater than the bound we produced. 

 We note that (\ref{eq:NEKOK}) corresponds to special upper bounds calculated in previous work. In the case where $K=1$, we have an upper bound given by $$ N \leq 1+g_1^2$$ and hence recover the upper bound given by \cite{Witt90}. In \cite[pg 464]{PetBook} and \cite{Pet97} the problem is considered with one mass per plane. In this work, just after application of the Resultant Theorem, we have an upper bound given for $g_1 = ... = g_K = 1$ by $$N\leq 2^{2K-1}.$$ This agrees with Theorem 1 for $K>1$ since then$$E_K=O_K= 1+\sum_{\ell=1}^{K-1} 2^{\ell-1} = 2^{K-1},$$ and so $$ E_K^2+O_K^2 = 2(2^{K-1})^2 = 2^{2K-1}.$$ It should be mentioned that Petters goes on to reduce this bound by 2. 
 
 It is interesting to compare the bound given by (\ref{eq:NEKOK}) and the bound given in \cite{Pet97} in the case where we could reasonably consider the masses to be separated into two clusters in two planes $(K=2)$. Take the total number of point-masses to be $g=2n$ for some positive integer $n$. Say there are $g_1$ in the first plane and  $g_2$ in the second plane with $g_1=g_2=n$. Our upper bound for this case would be $1 + 6n^2 + n^4$ and the bound given in \cite{Pet97} would be $ 2^{4n-1}-2$. 
 
Recall that in the single plane case the quadratic bound given by Witt was improved by Khavinson and Neumann to a linear one. Their methodology utilizing complex dynamics may not immediately extend to the multiplane case since the lensing map is then no longer harmonic. It therefore remains to be seen if one can produce a bound which is linear in each of the $g_i$. For example, does the bound $N\leq \prod_{i=1}^K 5(g_i-1)$ hold?   
   
\bibliographystyle{abbrv}
\bibliography{grav}

\begin{thebibliography}{10}

\bibitem{Berg2010}
W.~Bergweiler and A.~Eremenko.
\newblock On the number of solutions of a transcendental equation arising in
  the theory of gravitational lensing.
\newblock {\em Computational Methods and Function Theory}, 10(1):303--324, Jun
  2010.

\bibitem{CoxBook}
D.~A. Cox, J.~Little, and D.~O'Shea.
\newblock {\em Ideals, Varieties, and Algorithms: An Introduction to
  Computational Algebraic Geometry and Commutative Algebra, 3/e (Undergraduate
  Texts in Mathematics)}.
\newblock Springer-Verlag, Berlin, Heidelberg, 2007.

\bibitem{Keeton}
C.~D. Fassnacht, C.~R. Keeton, and D.~Khavinson.
\newblock Gravitational lensing by elliptical galaxies, and the {S}chwarz
  function.
\newblock In {\em Analysis and mathematical physics}, Trends Math., pages
  115--129. Birkh\"{a}user, Basel, 2009.

\bibitem{Khav2010}
D.~Khavinson and E.~Lundberg.
\newblock Transcendental harmonic mappings and gravitational lensing by
  isothermal galaxies.
\newblock {\em Complex Analysis and Operator Theory}, 4(3):515--524, Aug 2010.

\bibitem{KN}
D.~Khavinson and G.~Neumann.
\newblock On the number of zeros of certain rational harmonic functions.
\newblock {\em Proc. Amer. Math. Soc.}, 134(4):1077--1085, 2006.

\bibitem{Kuznia}
L.~Kuznia and E.~Lundberg.
\newblock Fixed points of conjugated blaschke products with applications to
  gravitational lensing.
\newblock {\em Computational Methods and Function Theory}, 9, 10 2009.

\bibitem{Pet10}
A.~Petters.
\newblock Gravity's action on light.
\newblock {\em Notices of the American Mathematical Society}, 57:1392--1409, 12
  2010.

\bibitem{PetBook}
A.~Petters, H.~Levine, and J.~Wambsganss.
\newblock {\em Singularity Theory and Gravitational Lensing}.
\newblock Progress in Mathematical Physics. Birkh{\"a}user Boston, 2001.

\bibitem{PetMorse}
A.~O. Petters.
\newblock Morse theory and gravitational microlensing.
\newblock {\em Journal of Mathematical Physics}, 33(5):1915--1931, 1992.

\bibitem{Pet97}
A.~O. Petters.
\newblock Multiplane gravitational lensing. iii. upper bound on number of
  images.
\newblock {\em Journal of Mathematical Physics}, 38(3):1605--1613, 1997.

\bibitem{PetWer10}
A.~O. Petters and M.~C. Werner.
\newblock Mathematics of gravitational lensing: multiple imaging and
  magnification.
\newblock {\em General Relativity and Gravitation}, 42(9):2011--2046, Sep 2010.

\bibitem{Rhie2001}
S.~H. Rhie.
\newblock {Can a gravitational quadruple lens produce 17 images?}
\newblock 2001.
\newblock https://arxiv.org/abs/astro-ph/0103463.

\bibitem{Rhie}
S.~H. Rhie.
\newblock {N-point gravitational lenses with 5(n-1) images}.
\newblock 2003.
\newblock https://arxiv.org/abs/astro-ph/0305166.

\bibitem{Witt90}
H.~J. {Witt}.
\newblock {Investigation of high amplification events in light curves of
  gravitationally lensed quasars}.
\newblock {\em Astronomy and Astrophysics}, 236:311--322, Sept. 1990.

\end{thebibliography}

\end{document}